\def\BibTeX{{\rm B\kern-.05em{\sc i\kern-.025em b}\kern-.08em
    T\kern-.1667em\lower.7ex\hbox{E}\kern-.125emX}}
\newtheorem{mythm}{Theorem}
\newcommand{\always}{\square}
\newcommand{\eventually}{\lozenge}
\newcommand{\until}{\mathcal{U}_I}
\newcommand{\ew}{\boxbox_{\ra}}
\newcommand{\sw}{\diamonddiamond_{\ra}}
\newcommand{\ag}{\mathcal{A}_{\ra}}
\newcommand{\ct}{\mathcal{C}_{\ra}}
\newcommand{\op}{\mathrm{op}}
\newcommand{\avg}{\mathrm{avg}}
\newcommand{\nb}{L_{\ra}}
\newcommand{\nbx}{\alpha_{\ra}^x(\omega, t, l)}
\newcommand{\sat}{\models}
\newcommand{\eqdef}{\Leftrightarrow}
\newcommand{\everywhere}{\boxbox}
\newcommand{\somewhere}{\diamonddiamond}
\newcommand{\agr}{\mathcal{A}}
\newcommand{\ra}{\mathcal{D}}
\newcommand{\tablefontsize}{\scriptsize}
\definecolor{brilliantlavender}{rgb}{0.96, 0.73, 1.0}
\definecolor{blond}{rgb}{0.98, 0.94, 0.75}
\definecolor{celadon}{rgb}{0.67, 0.88, 0.69}
\definecolor{columbiablue}{rgb}{0.61, 0.87, 1.0}
\definecolor{lavenderblush}{rgb}{1.0, 0.94, 0.96}
\definecolor{electriclavender}{rgb}{0.96, 0.73, 1.0}
\newcommand{\ent}[1]{\colorbox{lavenderblush}{#1}}
\newcommand{\temporal}[1]{\colorbox{celadon}{#1}}
\newcommand{\aggregation}[1]{\colorbox{blond}{#1}}
\newcommand{\condition}[1]{\colorbox{pink}{#1}}
\newcommand{\comparison}[1]{\colorbox{columbiablue}{#1}}
\newcommand{\spatial}[1]{\colorbox{electriclavender}{#1}}
\newcolumntype{L}[1]{>{\raggedright\let\newline\\\arraybackslash\hspace{0pt}}m{#1}}
\newcolumntype{C}[1]{>{\centering\let\newline\\\arraybackslash\hspace{0pt}}m{#1}}
\newcolumntype{R}[1]{>{\raggedleft\let\newline\\\arraybackslash\hspace{0pt}}m{#1}}
\newcommand{\sectref}[1]{Section~\ref{#1}}
\newcommand{\figref}[1]{Figure~\ref{#1}}
\newcommand{\tabref}[1]{Table~\ref{#1}}
\newcommand{\algref}[1]{Algorithm~\ref{#1}}
\newtheorem{example}{Example}
\newtheorem{lemma}{Lemma}
\begin{document}

\title{SaSTL: Spatial Aggregation Signal Temporal Logic for\\ Runtime Monitoring in Smart Cities\\
}

\author{\IEEEauthorblockN{Meiyi Ma}
\IEEEauthorblockA{Department of Computer Science\\
University of Virginia\\
Charlottesville, USA\\
Email: meiyi@virginia.edu}
\and
\IEEEauthorblockN{Ezio Bartocci}
\IEEEauthorblockA{Faculty of Informatics\\
TU Wien\\
Vienna, Austria\\
Email: ezio.bartocci@tuwien.ac.at}
\and
\IEEEauthorblockN{Eli Lifland, John Stankovic, Lu Feng}
\IEEEauthorblockA{Department of Computer Science\\
University of Virginia\\
Charlottesville, USA\\
Email: \{edl9cy, stankovic, lu.feng\}@virginia.edu}
}


\maketitle

\begin{abstract}
We present SaSTL---a novel Spatial Aggregation Signal Temporal Logic---for the efficient runtime monitoring of safety and performance requirements in smart cities.
We first describe a study of over 1,000 smart city requirements, some of which can not be specified using existing logic such as Signal Temporal Logic (STL) and its variants. 
To tackle this limitation, we develop two new logical operators in SaSTL to augment STL for expressing spatial aggregation and spatial counting characteristics that are commonly found in real city requirements. 
We also develop efficient monitoring algorithms that can check a SaSTL requirement in parallel over multiple data streams (e.g., generated by multiple sensors distributed spatially in a city).
We evaluate our SaSTL monitor by applying to two case studies with large-scale real city sensing data (e.g., up to 10,000 sensors in one requirement). The results show that SaSTL has a much higher coverage expressiveness than other spatial-temporal logics, and with a significant reduction of computation time for monitoring requirements. 
We also demonstrate that the SaSTL monitor can help improve the safety and performance of smart cities via simulated experiments.

\end{abstract}

\begin{IEEEkeywords}
Spatial Temporal Logic, runtime monitoring, requirement specification, smart cities
\end{IEEEkeywords}


\section{Introduction}
\label{sect:intro}
Smart cities are emerging around the world. Examples include Chicago's Array of Things project~\cite{arrayofthings}, IBM's Rio de Janeiro Operations Center~\cite{rio2012center} and Cisco's Smart+Connected Operations Center~\cite{cisco-center}, just to name a few. Smart cities utilize a vast amount of data and smart services to enhance the safety, efficiency, and performance of city operations~\cite{ma2019data,yuan2018dynamic}. 
There is a need for monitoring city states in real-time to ensure safety and performance requirements~\cite{ma2017cityguard}.
If a requirement violation is detected by the monitor, the city operators and smart service providers can take actions to change the states, such as improving traffic performance, rejecting unsafe actions, sending alarms to polices, etc.
The key \textbf{challenges} of developing such a monitor include how to use an expressive, machine-understandable language to specify smart city requirements, 
and how to efficiently monitor requirements that may involve multiple sensor data streams (e.g., some requirements are concerned about thousands of sensors in a smart city).

Previous works~\cite{zhang2018detecting, sheng2019case, ma2017runtime,haghighi2015spatel} have proposed solutions to monitor smart cities using formal specification languages and their monitoring machinery.
One of the latest works, CityResolver \cite{ma2018cityresolver} uses Signal Temporal Logic (STL)~\cite{Maler2004} to support the specification-based monitoring of safety and performance requirements of smart cities. 
However, STL is not expressive enough to specify smart city requirements concerning \emph{spatial} information such as \emph{``the average noise level within 1 km of all elementary schools should always be less than 50 dB''}. 
There are some existing spatial extensions of STL (e.g.,  SSTL~\cite{NenziBCLM15}, SpaTeL~\cite{haghighi2015spatel} and 
STREL~\cite{BartocciBLN17}), which can express requirements 
such as \emph{``there should be no traffic congestion on 
all the roads in the northeast direction''}. 
But they are not expressive enough to specify requirements like \emph{``there should be no traffic congestion on all the roads on average''}, or \emph{``on 90\% of the roads''}, which require the aggregation and counting of signals in the spatial domain. To tackle these challenges and limitations, we develop a novel Spatial Aggregation Signal Temporal Logic (SaSTL), which extend STL with two new logical operators for expressing spatial aggregation and spatial counting characteristics that are commonly found in real city requirements. 
More specifically, this paper has the following major \textbf{contributions}:

\begin{figure*}[!t]
    \centering
    
    \includegraphics[width = 0.85\textwidth]{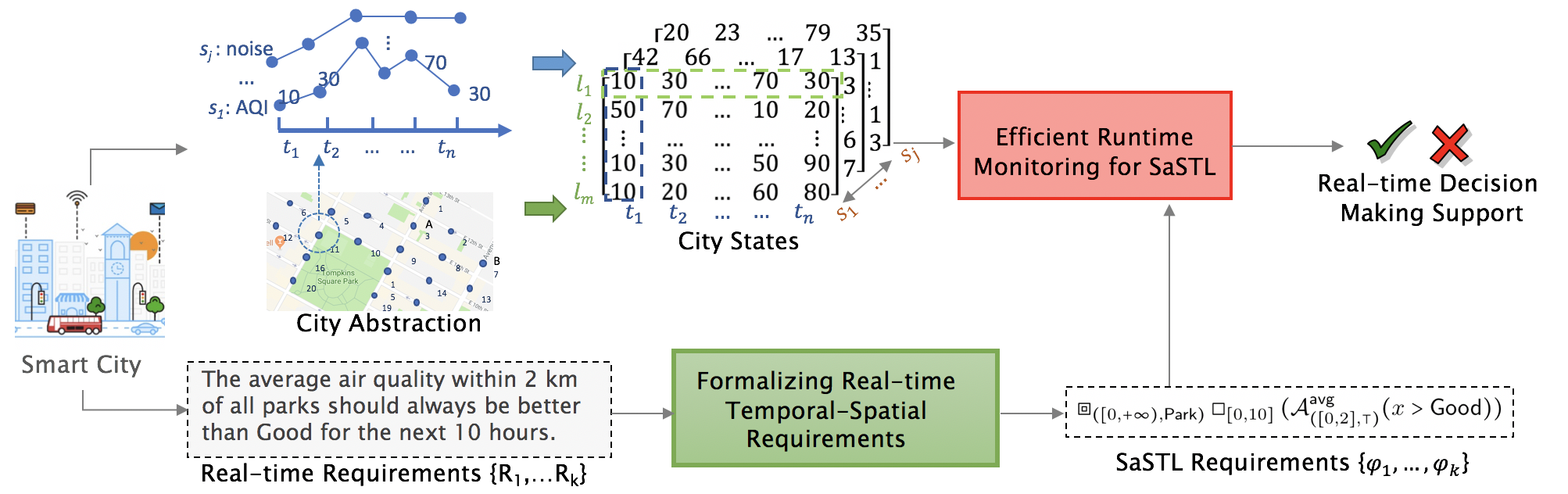}
    \caption{{A framework for runtime monitoring of real-time city requirements}}
    \label{fig:overview}
    \vspace{-1.3em}
\end{figure*}

(1) To the best of our knowledge, this is the first work studying and annotating over 1,000 real smart city requirements across different domains to identify the gap of expressing smart city requirements with existing formal specification languages. 
As a result, we found that aggregation and counting signals in the spatial domain (e.g., for representing sensor signals distributed spatially in a smart city) are extremely important for specifying and monitoring city requirements. 

(2) Drawing on the insights from our requirement study, we develop a new specification language SaSTL, which extends STL with a \emph{spatial aggregation} operator and a \emph{spatial counting} operator.
SaSTL can be used to specify Point of Interests (PoIs), the physical distance, spatial relations of the PoIs and sensors, aggregation of signals over locations, degree/percentage of satisfaction and the temporal elements in a very flexible spatial-temporal scale. 

(3) We compare SaSTL with some existing specification languages and show that SaSTL has a much higher coverage experessiveness (95\%) than STL (18.4\%), SSTL (43.1\%) or STREL (43.1\%) over 1,000 real city requirements.

(4) We develop novel and efficient monitoring algorithms for SaSTL. In particular, we present two new methods to speed up the monitoring performance: (i) dynamically prioritizing the monitoring based on cost functions assigned to nodes of the syntax tree, and (ii) parallelizing the monitoring of spatial operators among multiple locations and/or sensors. 

(5) We evaluate the SaSTL monitor by applying it to a case study of monitoring real city data collected from the Chicago's Array of Things~\cite{arrayofthings}. The results show that SaSTL monitor has the potential to help identify safety violations and support the city managers and citizens to make decisions. 

(6) We also evaluate the SaSTL monitor on a second case study of conflict detection and resolution among smart services in simulated New York city with large-scale real sensing data (e.g., up to 10,000 sensors in one requirement). 
Results of our simulated experiments show that SaSTL monitor can help improve the city’s performance (e.g.,  21.1\%  on  the  environment and 16.6\% on public safety), with a significant reduction of  computation  time  compared  with  previous  approaches.

\section{Approach Overview}
\label{sect:framework}

\figref{fig:overview} shows an overview of our SaSTL runtime monitoring framework for smart cities. 
We envision that such a framework would operate in a smart city's central control center (e.g., IBM's Rio de Janeiro Operations Center~\cite{rio2012center} or Cisco's Smart+Connected Operations Center~\cite{cisco-center}) where sensor data about city states across various locations are available in real time. 
The framework would monitor city states and check them against a set of smart city requirements at runtime. The monitoring results would be presented to city managers to support decision making. 
The framework makes abstractions of city states in the following way. 
The framework formalizes a set of smart city requirements (See~\sectref{sect:moti}) to some machine checkable SaSTL formulas (See~\sectref{sect:spec}). 
Different data streams (e.g. CO emission, noise level) over temporal and spatial domains can be viewed as a 3-dimensional matrix. For any signal $s_j$ in signal domain $S$, each row is a time-series data at one location and each column is a set of data streams from all locations at one time. Next, the efficient real-time monitoring for SaSTL
verifies the states with the requirements and outputs the Boolean satisfaction to the decision makers, who would take actions to resolve the violation. To support decision making in real time, we improve the efficiency of the monitoring algorithm in \sectref{sect:alg}.  
 We will describe more details of the framework in the following sections.

\section{Analysis of Real City Requirements}
\label{sect:moti}

To better understand real city requirements, we conduct a requirement study. 
We collect and statistically analyze 1000 quantitatively specified city requirements 
(e.g., standards, regulations, city codes, and laws) across different application domains, including energy, environment, transportation, emergency, and public safety from large cities (e.g. New York City, San Francisco, Chicago, Washington D.C., Beijing, etc.). Some examples of these city requirements are highlighted in \tabref{tab:reqexamples}. We identify key required features to have in a specification language and its associated use in a city runtime monitor. The summarized statistical results of the study and key elements we identified (i.e., temporal, spatial, aggregation, entity, comparison, and condition) are shown in \tabref{tab:elements}.

\begin{table*}[!h]
	\caption{Examples of city requirements from different domains (The key elements of a requirement are highlighted as, \temporal{temporal}, \spatial{spatial}, \aggregation{aggregation}, \ent{entity}, \condition{condition}, \comparison{comparison}.  )}
	\centering
	\tablefontsize
	\label{tab:reqexamples}
	\begin{tabular}{|L{1.5cm}|L{15cm}|}
		\hline
		\textbf{Domain} & \textbf{Example}\\\hline 
		
		\multirow{3}{*}{\textbf{Transportation}} & \comparison{Limits} \ent{vehicle idling} to \temporal{one minute} adjacent to \aggregation{any} \spatial{school, pre-K to 12th grade}, public or private, in the \spatial{City of New York}~\cite{r1}. \\\cline{2-2}
		
		& The engine, power and exhaust mechanism of each motor vehicle shall be equipped, adjusted and operated to \comparison{prevent} the escape of a trail of \ent{visible fumes or smoke} for \comparison{more than} \temporal{ten (10) consecutive seconds}~\cite{r10}.\\\cline{2-2}

		& \comparison{Prohibit} \ent{sight-seeing buses} from using \aggregation{all} \spatial{bus lanes} between the hours of \temporal{7:00 a.m. and 10:00 a.m.} on \temporal{weekdays}~\cite{r2}. \\\hline

		\multirow{2}{*}{\textbf{Energy}} & Operate the \ent{system} to \comparison{maintain} \spatial{zone} temperatures \comparison{down} to 55°F or \comparison{up to} 85°F~\cite{r3}.\\\cline{2-2}
		
		& The \aggregation{total} \ent{leakage} shall be \comparison{less than or equal} to 4 cubic feet \aggregation{per} \temporal{minute} \aggregation{per} \spatial{100 square feet} of \spatial{conditioned floor area}~\cite{r4}. \\\hline
		
		\multirow{2}{*}{\textbf{Environment}}  
		
		& LA Sec 111.03 \comparison{minimum} \ent{ambient noise level} table: \spatial{ZONE M2 and M3} -- \temporal{DAY}: 65 dB(A) \temporal{NIGHT}: 65 dB(A)~\cite{r11}. \\\cline{2-2}
		
		 & The \aggregation{total amount} of \ent{HCHO emission} should be \comparison{less than} 0.1mg \aggregation{per} m$^3$ \temporal{within an hour}, and the \aggregation{total amount} of PM10 emission should be \comparison{less than} 0.15 mg \aggregation{per} m$^3$ \temporal{within 24 hours}~\cite{r6}. \\\hline
	\multirow{2}{*}{\textbf{Emergency}}  & NYC Authorized \ent{emergency vehicles} may \condition{disregard} 4 primary rules regarding traffic~\cite{r7}.  \\\cline{2-2}
	 & \comparison{At least} one \ent{ambulance} should be equipped \aggregation{per} 30,000 population (counted \spatial{by area}) to obtain the shortest radius and fastest response time~\cite{r8}. \\\hline
	\textbf{Public Safety} &  \ent{Security staff} shall visit \comparison{at least once} \temporal{per week} in \spatial{public schools}~\cite{r9}. \\\hline
	\end{tabular}
	\vspace{-1.4em}
\end{table*}

\begin{table}[!h]
	\caption{Key elements of city requirements and statistical results from 1000 real city requirements}
	\centering
	\label{tab:elements}
	\tablefontsize
\begin{tabular}{|L{1.2cm}|l|c|L{3.8cm}|}
\hline
\textbf{Element }                                                                & \textbf{Form}             & \textbf{Num} & \textbf{Example}                                           \\ \hline
\multirow{4}{*}{\begin{tabular}[c]{@{}l@{}}\textbf{Temporal}\\ \end{tabular}}    & Dynamic Deadline &  77   & limit ... to one minute                           \\ \cline{2-4} 
                                                                            & Static Deadline  &   98  & at least once a week                              \\ \cline{2-4} 
                                                                            & Interval         &  168   & from 8am to 10am; within 24 hours;                \\ \cline{2-4} 
                                               
                                                                            & Default          &  657   & The noise (always) should not exceed 50dB.        \\ \hline
\multirow{3}{*}{\begin{tabular}[c]{@{}l@{}}\textbf{Spatial}\\ \end{tabular}}     & \textbf{PoIs/Tags}             &  \textbf{801}   & school area; all parks;                      \\ \cline{2-4} 
                                                                            & \textbf{Distance}         &  \textbf{650}   & Nearby                                            \\ \cline{2-4} 
                                                                            & Default          &  154   & (everywhere) ; (all) locations                    \\ \hline
\multirow{3}{*}{\begin{tabular}[c]{@{}l@{}}\textbf{Aggregation}\\ \end{tabular}} & \textbf{Count, Sum}       &  \textbf{256}   & in total; x out of N locations; \%;                    \\ \cline{2-4} 
                                                                            & \textbf{Average}          &  \textbf{196}   & per m$^2$;                                        \\ \cline{2-4} 
                                                                            & \textbf{Max, Min}         &  \textbf{67}   & highest/lowest value                              \\ \hline                               
\textbf{Entity} & Subject & 1000 & air quality; Buses; \\\hline
\multirow{3}{*}{\begin{tabular}[c]{@{}l@{}}\textbf{Comparison}\\ \end{tabular}}     & Value comparison &  836   & More than, less than                              \\ \cline{2-4} 
                                                                            & Boolean          &  388   & Street is blocked; should                                \\ \cline{2-4} 
                                                                            & Not              &   456  & It is unlawful/prohibited...                       \\ \hline
                                         
\multirow{2}{*}{\begin{tabular}[c]{@{}l@{}}\textbf{Condition}\\ \end{tabular}}  
                                                                            & Until            &    24 & keep... until the street is not blocked.          \\ \cline{2-4} 
                                                                            & If/Except           &  44   & If rainy, the speed limit...  \\ \hline
\end{tabular}
\vspace{-2em}
\end{table}




\textbf{Temporal}: Most of the requirements include a variety of temporal constraints, e.g. a static deadline, a dynamic deadline, or time intervals. In many cases (65.7\%), the temporal information is not explicitly written in the requirement, which usually means it should be ``always'' satisfied. 
In addition, city requirements are highly real-time driven. In over 80\% requirements, cities are required to detect and resolve requirement violations at runtime. It indicates a high demand for runtime monitoring. 


\textbf{Spatial}: A requirement usually specifies its spatial range explicitly using the Points of Interest (PoIs) (80.1\%), such as  ``park'', ``xx school'', along with a distance range (65\%). One requirement usually points to a set of places (e.g. all the schools). Therefore, it is very important for a formal language to be able to specify the spatial elements across many locations within the formula, rather than one formula for each location. 

We also found that the city requirements specify a very large spatial scale. Different from the requirements of many other cyber physical systems, requirements from smart cities are highly spatial-specific and usually involve a very large number of locations/sensors. 
For example, the first requirement in \tabref{tab:reqexamples} specifies a vehicle idling time ``adjacent to any school, pre-K to 12th grade in the City of New York''. There are about 2000 pre-K to 12th schools, even counting 20 street segments nearby each school, there are 40,000 data streams to be monitored synchronously. An efficient monitoring is highly demanded.

\textbf{Aggregation}: In 51.9\% cases, requirements are specified on the aggregated signal over an area, such as, ``the total amount'', ``average...per 100 square feet'', ``up to four vending vehicles in any given city block'', ``at least 20\% of travelers from all entrances should ...'', etc. Therefore, aggregation is a key feature for the specification language.  




\section{Formalizing Temporal-Spatial Requirements}
\label{sect:spec}



SaSTL extends STL with 
two spatial operators: a \emph{spatial aggregation} operator
and a \emph{neighborhood counting} operator.
Spatial aggregation enables combining (according to a chosen 
operation) measurements of the same type (e.g., environmental 
temperature), but taken from different locations.
The use of this operator can be suitable in requirements 
where it is necessary to evaluate the average, best or worst 
value of a signal measurement in an area close to the desired location.
The neighborhood counting operator allows measuring the number/percentage of neighbors of a location that satisfy a certain requirement. 
In this section, we formally define the syntax, and semantics.


\subsection{SaSTL Syntax}
We define a multi-dimensional \emph{spatial-temporal signal} as $\omega: \mathbb{T} \times L \to \{{\mathbb{R}\cup\{\bot\}\}} ^n$,
where $\mathbb{T}=\mathbb{R}_{\ge 0}$, represents the continuous time and $L$ is the set of locations. We define $X= \{x_1, \cdots, x_n \}$ as the set of 
variables for each location. 
Each variable can assume a real value $ v \in \mathbb{R}$ or
is undefined for a particular 
location ($x_i = \bot$).

We denote by $\pi_{x_{i}}(\omega)$ as the projection of $\omega$ on its component variable $x_i \in X$.  We define $P = \{p_1, \cdots, p_m \}$ a set of propositions (e.g. $\{\mathsf{School, Street, Hospital}, \cdots\}$ ) and 
$\mathcal{L}$ a labeling function $\mathcal{L}: L \rightarrow 2^{P}$ that assigns for each location 
the set of the propositions that 
are true in that location. 

A \emph{weighted undirected graph} is a tuple $G=(L, E, \eta)$ where 
$L$ is a finite non-empty set of nodes representing locations,
$E \subseteq L \times L$ is the set of edges connecting nodes,
and $\eta: E \to \mathbb{R}_{\ge 0}$ is a cost function over edges.
We define the \emph{weighted distance} between two locations $l, l' \in L$ as
$$ d(l,l'):= \min\{\sum_{e\in \sigma}\eta(e) \ |\ \sigma \mbox{ is a path between } l \mbox{ and } l'\}. $$

Then we define the spatial domain $\ra$ as, 

\begin{equation*}
\begin{array}{cl}
     \ra :=&  ([d_1, d_2],\psi) \\
     \psi :=& \top\;|\;p\;|\;\neg\;\psi\;|\;\psi\;\vee\;\psi 
\end{array}
\end{equation*}
 
where $[d_1, d_2]$ defines a spatial interval with $d_1 < d_2$ and $d_1,d_2 \in \mathbb{R}$, and $\psi$ specifies the property 
over the set of propositions
that must hold in each location.
In particular, 
$\ra = ( [0,+\infty), \top)$ indicates the whole spatial domain. 
We denote $L_{([d_1,d_2], \psi)}^l:=\{l' \in L \ | \ 0 \le d_1 \le d(l, l') \le d_2 \mbox{ and } 
\mathcal{L}(l') \sat \psi \}$ 
as the set of locations at a distance between $d_1$ and $d_2$ from $l$ for which $\mathcal{L}(l')$ satisfies $\psi$.    
We denote the set of non-null values for signal variable $x$ at time point $t$ location $l$ over locations in $\nb^l$ by 
\begin{equation*}
\scriptsize
 \nbx:=\{\pi_x(\omega)[t, l'] \ | \ l' \in \nb^l \mbox{ and } \pi_x(\omega)[t, l'] \neq \bot\}.   
\end{equation*}
We define a set of operations $\op(\nbx)$ for $\op \in \{\max, \min, \mathrm{sum}, \avg\}$  when $\nbx \neq \emptyset$ 
that computes the maximum, minimum, summation and average of values in the set $\nbx$, respectively. 


To be noted, Graph $G$ and its weights between nodes are constructed flexibly based on the property of the system. For example, we can build a graph with fully connected sensor nodes and their Euclidean distance as the weights when monitoring the air quality in a city; or we can also build a graph that only connects the street nodes when the two streets are contiguous and apply Manhattan distance. 
It does not affect the syntax and semantics of SaSTL. 


The syntax of SaSTL is given by 
\begin{equation*} 
\begin{array}{cl}
\varphi  :=& x \sim c\;| \
 \neg \varphi \;| \
\varphi_1 \land \varphi_2 \;|\
\varphi_1 \until \varphi_2 \;|\
\ag^{\op} x \sim c \;|\
\ct^{\op} \varphi \sim c \\
\end{array}
\end{equation*}

where $x \in X$, $\sim \in \{<, \le\}$, $c \in \mathbb{R}$ is a constant, 
$I \subseteq \mathbb{R}_{> 0}$ is a real positive dense time interval, 
$\until$ is the \emph{bounded until} temporal operators from STL. 
The \emph{always} (denoted $\always$) and \emph{eventually} (denoted $\eventually$) temporal operators can be derived the same way as in STL, where $\eventually \varphi \equiv \mathsf{true} \ \until \varphi$, and $\always \varphi \equiv \neg \eventually \neg \varphi$.


We define a set of spatial \emph{aggregation} operators $\ag^{\op} x \sim c$ for 
$\op \in \{\max, \min, \mathrm{sum}, \avg\}$
that evaluate the aggregated product of traces $\op(\nbx)$ over a set of locations $l \in \nb^l$.
We also define a set of new spatial \emph{counting} operators $\ct^{\op} \varphi \sim c$ for 
$\op \in \{\max, \min, \mathrm{sum}, \avg\}$ that counts the satisfaction of traces over a set of locations. 
More precisely, we define 
$\ct^{\op} \varphi = \op(\{g((\omega, t, l')  \sat \varphi) \ | \ l' \in \nb^l\})$, where $g((\omega, t, l)  \sat \varphi)) = 1$ if $(\omega, t, l)  \sat \varphi$, otherwise $g((\omega, t, l)  \sat \varphi)) = 0$. 
From the new \textit{counting} operators, we also derive the \emph{everywhere} operator as $\ew \varphi \equiv \ct^{\mathrm{min}} \varphi > 0$, and \emph{somewhere} operator as $\sw \varphi \equiv \ct^{\mathrm{max}} \varphi > 0$.
In addition, $\ct^{\mathrm{sum}} \varphi$ specifies the total number of locations that satisfy $\varphi$ and $\ct^{\mathrm{avg}} \varphi$ specifies the percentage of locations satisfying $\varphi$. 

We now illustrate how to use SaSTL to specify various city requirements, especially for the spatial aggregation and spatial counting, and how important these operators are for the smart city requirements using examples below.

\begin{example} [Spatial Aggregation]
Assume we have a requirement, ``The average noise level in the school area (within 1 km) in New York City should always be less than 50 dB and the worst should be less than 80 dB in the next 3 hours'' is formalized as,
   $\everywhere_{([0, +\infty), \mathsf{School})}\always_{[0,3]}( (\agr_{([0,1], \top)}^{\avg} x_\mathsf{Noise} < 50) \land (\agr_{([0,1], \top)}^{\mathsf{max}} x_\mathsf{Noise} < 80))$.
\label{ex:sytax}
\vspace{-0.5em}
\end{example}

$([0, +\infty), \mathsf{School})$ selects all the locations labeled as ``school'' within the whole New York city (${[0, +\infty)}$) (predefined by users). 
$\always_{[0, 3]}$ indicates this requirement is valid for the next three hours. $(\agr_{([0,1], \top)}^{\avg} x_\mathsf{Noise} < 50) \land (\agr_{([0,1], \top)}^{\mathsf{max}} x_\mathsf{Noise} < 80)$ calculates the average and maximal values in 1 km for each ``school'', and compares them with the requirements, i.e. 50 dB and 80 dB. 

Without the spatial aggregation operators, STL and its extended languages cannot specify this requirement. First, they are not able to first dynamically find all the locations labeled as ``school''. 
To monitor the same spatial range, users have manually get all traces from schools, and then repeatedly apply this requirement to each located sensor within 1 km of a school and do the same for all schools.  
More importantly, STL and its extended languages could not specify ``average'' or  ``worst'' noise level. Instead, it only monitors each single value, which is prone to noises and outliers and thereby causes inaccurate results.

\begin{example}[Spatial Counting]
A requirement that ``At least 90\% of the streets, the particulate matter (PMx) emission should not exceed \textit{Moderate} in 2 hours'' is formalized as
  $\mathcal{C}_{([0,+\infty), \mathsf{Street})}^\avg(\always_{[0,2]} ( x_\mathsf{PMx} < \mathsf{Moderate})) > 0.9$.
\label{ex2:syntax}
\end{example}
${\mathcal{C}_{([0,+\infty),\mathsf{Street})}^\avg} \varphi > 0.9$ represents the percentage of satisfaction is larger than 90\%. 
Specifying the percentage of satisfaction is very common and important among city requirements. 





\subsection{SaSTL Semantics}
We define the SaSTL semantics as the \emph{satisfiability relation}
$(\omega, t, l) \sat \varphi$, indicating that the spatio-temporal signal $\omega$ satisfies a formula $\varphi$ at the time point $t$ in location $l$ when $\pi_v(\omega)[t, l]\neq \bot$ and $\nbx \neq \emptyset$.  We define that $(\omega, t, l)  \sat \varphi$ if $\pi_v(\omega)[t, l] = \bot$.

\vspace{-1em}
\begin{alignat*}{2}
    (\omega, t, l) & \sat x \sim c 
        && \eqdef \pi_x(\omega)[t, l] \sim c  \\
    (\omega, t, l) & \sat \neg \varphi
        && \eqdef (\omega, t, l) \not \sat \varphi \\
    (\omega, t, l) & \sat \varphi_1 \land \varphi_2
        && \eqdef (\omega, t, l) \sat \varphi_1 \mbox{ and } (\omega, t, l) \sat \varphi_2 \\
    (\omega, t, l) & \sat \varphi_1 \until \varphi_2 
        && \eqdef \exists t' \in (t+I) \cap \mathbb{T}: (\omega, t', l) \sat \varphi_2 \\
        & && \hspace{2em}  \mbox{ and } \forall t'' \in (t, t'), (\omega, t'', l) \sat \varphi_1 \\
    (\omega, t, l) & \sat  \ag^{\op} x \sim c   
        && \eqdef \op (\nbx  ) \sim c  \\
    (\omega, t, l) & \sat  \ct^{\op} \varphi \sim c   
        && \eqdef     \op(\{g((\omega, t, l')  \sat \varphi) \ | \ l' \in \nb^l\}) \sim c \\
\end{alignat*}

\section{Efficient Monitoring for SaSTL}
\label{sect:alg}

In this section, we first present monitoring algorithms for SaSTL, then describe two optimization methods to speed up the monitoring performance.


\subsection{Monitoring Algorithms for SaSTL}

The \textit{inputs} of the monitor are the SaSTL requirements 
$\varphi$ (including the time $t$ and location $l$), a weighted undirected graph $G$ and the temporal-spatial data $\omega$. The \textit{output} of the monitoring algorithm for each requirement
is a Boolean value indicating whether the requirement is 
satisfied or not. \algref{alg:sastl} outlines the monitoring algorithm.
To start with, the monitoring algorithm parses $\varphi$ to sub formulas and calculates the satisfaction for each operation recursively. 
We derived operators $\always$ and $\eventually$ from $\until$, and operators $\everywhere$ and $\somewhere$ from $\ct^{\mathsf{op}}\sim c$, so we only show the algorithms for $\until$ and $\ct^{\mathsf{op}}\sim c$.  
\begin{algorithm}
\tablefontsize
  \SetKwFunction{Monitor}{Monitor}
  \SetKwProg{Fn}{Function}{:}{}
  \Fn{\Monitor{$\varphi,\omega, t, l, G$}}{
      \SetKwInOut{Input}{Input}
      \SetKwInOut{Output}{Output}
      \SetKwFor{Case}{Case}{}{}
      \Input{SaSTL Requirement $\varphi$, Signal $\omega$, Time $t$, Location $l$, weighted undirected graph $G$}
      
      \Output{Satisfaction Value (Boolean value)}
      
      \Begin{
            \Switch{$\varphi$} {

                   \Case{$x\sim c$}{
                        \Return $\pi_x(\omega)[t, l] \sim c$;
                   }

                   \Case{$\neg \varphi$}{
                        \Return $\neg$ Monitor($\varphi,\omega, t, l, G$);
                   }
 
                   \Case{$\varphi_1 \land \varphi_2$  \Comment*[r]{See \algref{alg:and}}}{
                        \Return Monitor($\varphi_1,\omega, t, l, G$) 
                        $\land$ Monitor($\varphi_2,\omega, t, l, G$) 
                    }
                    
                    \Case{$ \varphi_1 U_I \varphi_2$ 
                    }{ 
                         \Return SatisfyUntil($\varphi_1,\varphi_2, I, \omega, t, l, G$);
                    }
 

                    \Case{$\ag^{\op} x \sim c$   \Comment*[r]{See \algref{alg:Aggregation}} }
                    {
                          \Return Aggregate$(x, c, op, \ra, t, l, G)$;
                    }
                    
                    \Case{$\ct^{\op} \varphi \sim c$   \Comment*[r]{See \algref{alg:Counting} for the standard version, and \algref{alg:CountingPara} for an improved parallel version.}}
                    { 
                          \Return CountingNeighbours$(\varphi, c, op, \ra, t, l, G) $; 
                    }
           }
       }
    }
\caption{SaSTL monitoring algorithm Monitor({$\varphi,\omega, t, l, G$})}
\label{alg:sastl}
\end{algorithm}

We present the satisfaction algorithms of the operators $\ag^{\mathsf{op}}$ and $\ct^{\mathsf{op}}$  in \algref{alg:Aggregation} and \algref{alg:Counting}, respectively. As we can tell from the algorithms, essentially, $\ag^{\mathsf{op}}$ calculates the aggregated values on the signal over a spatial domain, while $\ct^{\mathsf{op}}$ calculates the aggregated Boolean results over spatial domain. 

The time complexity of monitoring the logical and temporal operators of SaSTL is the same as STL~\cite{donze2013efficient} as follows.
\begin{itemize}
    \item The time complexity to monitor classical logical operators or basic propositions such as $\neg x$, $\land$ and $x\sim c$ is $O(1)$.
    \item The time complexity to monitor  temporal operators such as $\always_{I}$, $\eventually_{I}$, $\until$ is $O(T)$, where $T$ is the total number of samples within time interval $I$.
\end{itemize}

\begin{algorithm}
\tablefontsize
 \SetKwFunction{CS}{Aggregate}

  \SetKwProg{Fn}{Function}{:}{}
  \Fn{\CS{$x, c, op, \ra, \omega, t, l, G$}}{
  \Begin{
        \textbf{Real} v := 0; n := 0;
        
        \lIf{$op$ == "min"}{
             $v := \infty $
         }
        \lIf{$op$ == "max"}{
             $v := - \infty $
         }

     \For {$l' \in L^l_{\ra}$}{
          
           \If{$\mathsf{op} \in \{$min, max, sum$\}$}{
                       $v$ := $\mathsf{op}(v,\pi_x(\omega)[t, l'])$;
                  }
                  \If{$\mathsf{op} == $"avg"}{
                      $v$ := $\mathsf{sum}(v,\pi_x(\omega)[t, l'])$;
                  }
                  $n := n+1$
     }
     \lIf{$\mathsf{op}$ == "avg" $\land n \neq 0$}{
                $v :=v / n$
            }
    \eIf{$n==0$}{\Return $\mathsf{True}$}{\Return $v \sim c$;}
     
  }

  }
  
\caption{Aggregation of $(x, op, \ra, \omega, t, l, G)$}
\label{alg:Aggregation}
\end{algorithm}

\begin{algorithm}
\tablefontsize
 \SetKwFunction{CS}{CountingNeighbours}

  \SetKwProg{Fn}{Function}{:}{}
  \Fn{\CS{$\varphi, c, op, \ra, \omega, t, l, G$}}{
      \Begin{
              \textbf{Real} $v := 0$; $n := 0$
              
              \lIf{$op$ == "min"}{
                    $v := \infty$
              }
              
              \lIf{$op$ == "max"}
              {
                  $v := - \infty$
              }
             
              \For {$l' \in L^l_{\ra}$}{
                  \If{Monitor$(\varphi,\omega, t, l, G)$ $\land$ $\mathsf{op} \in \{$min, max, sum$\}$}{
                      $v$ := $\mathsf{op}(v,1)$;
                  }
                  \If{Monitor$(\varphi,\omega, t, l, G)$ $\land$ $\mathsf{op} == $"avg"}{
                      $v$ := $\mathsf{sum}(v,1)$;
                  }
                  $n := n+1$
             }

            \lIf{$\mathsf{op}$ == "avg" $\land n \neq 0$}{
                $v :=v / n$
            }
        \eIf{$n==0$}{\Return $\mathsf{True}$}{\Return $v \sim c$;}

      }
      
  }
\caption{Counting of $(x, op, \ra, \omega, t, l, G)$}
\label{alg:Counting}
\end{algorithm}

In this paper, we present the time complexity analysis for the spatial operators (Lemma \ref{lemma:spatial}) and the new SaSTL monitoring algorithm (Theorem \ref{th:timeAlg}). 
The total number of locations is denoted by $n$.  We assume that the positions of 
the locations cannot change in time (a fixed grid). We can precompute all the distances between locations and store them in 
an array of range trees~\cite{Lueker78} (one range tree for each 
location).
We further denote the monitored formula as $\phi$, which can be represented by a syntax tree, and let $|\phi|$ denote the total number of nodes in the syntax tree (number of operators). 
 
\begin{lemma}[Complexity of spatial operators]

The time complexity to monitor at each location $l$ at time $t$ the satisfaction of a spatial operator such as  $\everywhere_{\ra}$, $\somewhere_{\ra}$, $\ag^\mathsf{op}$, and $\ct^\mathsf{op}$ is $O(log(n) + |L|)$ 
where L is the set of locations at distance within 
the range $\ra$ from $l$.
\label{lemma:spatial}
\end{lemma}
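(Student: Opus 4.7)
The plan is to decompose the per-location, per-time cost of any spatial operator into two additive pieces: (i) the cost of materializing the neighborhood $L^l_{\ra}$, and (ii) the cost of a single linear pass over that neighborhood performing the aggregation or counting. I would argue each is at most $O(\log n + |L|)$, after which additivity gives the stated bound.

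For piece (i), I would lean on the preprocessing already assumed in the text: pairwise weighted distances are precomputed, and for each location $l$ we maintain a one-dimensional range tree keyed by $d(l,\cdot)$. A standard range-tree query for the distance interval $[d_1,d_2]$ returns $k$ matching locations in time $O(\log n + k)$. The propositional side condition $\mathcal{L}(l') \sat \psi$ is a fixed Boolean combination over the precomputed labels and can be checked in $O(1)$ as each candidate is reported, so the filtered set $L^l_{\ra}$ is assembled in $O(\log n + |L|)$.

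For piece (ii), I would inspect the pseudocode directly. Algorithm \ref{alg:Aggregation} performs one constant-time accumulator update per element plus a final normalization and comparison against $c$, yielding $O(|L|)$. Algorithm \ref{alg:Counting} is structurally identical from the point of view of the spatial operator itself: its own work is one Boolean/accumulator update per neighbor, so it too is $O(|L|)$, provided the recursive Monitor calls on the sub-formula $\varphi$ are charged to other nodes of the syntax tree rather than to the spatial operator under analysis. The derived $\everywhere_\ra$ and $\somewhere_\ra$ inherit the bound, since they are shorthand for $\ct^{\min}\varphi > 0$ and $\ct^{\max}\varphi > 0$ respectively. Summing (i) and (ii) gives $O(\log n + |L|)$.

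The hard part, as I see it, is not algorithmic but a matter of making the accounting honest, and I would want to state two conventions before closing the argument. First, $|L|$ must be read as output-sensitive: it is the size of the queried neighborhood returned by the range query, not $|L|$ as a global constant, and this is exactly why the range-tree reporting bound is the right tool. Second, the propositional filter $\psi$ must be treated as of constant size with respect to $n$; otherwise the per-candidate label check would enter multiplicatively rather than as an $O(1)$ factor. With both conventions made explicit, the two-piece decomposition above delivers the claimed $O(\log n + |L|)$ bound uniformly for $\everywhere_{\ra}$, $\somewhere_{\ra}$, $\ag^{\op}$, and $\ct^{\op}$.
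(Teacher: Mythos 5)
Your argument is correct and follows essentially the same route as the paper's proof: cite the range-tree reporting bound of $O(\log n + |L|)$ for retrieving the neighborhood $L^l_{\ra}$, then observe that the aggregation or counting in Algorithms~\ref{alg:Aggregation} and~\ref{alg:Counting} costs only constant work per reported location (the paper interleaves this work with the retrieval rather than doing a second pass, but that is the same asymptotic accounting). Your two explicit conventions---the output-sensitive reading of $|L|$ and charging the recursive $\mathsf{Monitor}$ calls inside $\ct^{\op}$ to other nodes of the syntax tree---are left implicit in the paper but are exactly the assumptions its proof relies on.
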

\begin{proof}
According to~\cite{Lueker78}, the time complexity to 
retrieve a set of nodes $L$ with a distance 
to a desired location in a range $\ra$  from a location $l$ is $O(log(n) + |L|)$.
The aggregation and counting operations of \algref{alg:Aggregation} and \algref{alg:Counting} can be performed
while the locations are retrieved.
\end{proof}


\begin{mythm}
The time complexity of the SaSTL monitoring algorithm  is upper-bounded by $O(|\phi|~T_{max}~(log(n) + |L|_{max}))$
where $T_{max}$ is the largest number of samples of  the 
intervals considered in the temporal operators of $\phi$ 
and $|L|_{max}$ is the maximum number of locations defined 
by the spatial temporal operators of $\phi$. 
\label{th:timeAlg}
\end{mythm}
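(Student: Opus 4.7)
The plan is to prove the bound by structural induction on the syntax tree of $\phi$, showing that each of its $|\phi|$ nodes contributes at most $T_{max}(log(n)+|L|_{max})$ to the monitor's total work. I would first pin down the per-operator cost at a single $(t,l)$ using the classification already given just before the theorem statement: atomic comparisons $x\sim c$, negation, and conjunction cost $O(1)$ aside from recursive calls; the temporal operator $\until$ adds a factor of at most $T_{max}$ by the standard STL analysis of~\cite{donze2013efficient}; and by \lemref{lemma:spatial}, each evaluation of $\ag^{\mathsf{op}}$ or $\ct^{\mathsf{op}}$ costs $O(log(n)+|L|_{max})$ once the precomputed range trees over the fixed location grid are in place.

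Next, I would model the computation as populating, for each subformula $\psi$ of $\phi$, a satisfaction table over the grid of $(t,l)$ pairs at which $\psi$ must be evaluated, sharing intermediate results so that each syntax-tree node is charged once per entry of this table rather than once per recursive call. Under this accounting, the total runtime is the sum over the $|\phi|$ nodes of the product of the temporal and spatial factors at that node, and each such product is at most $T_{max}(log(n)+|L|_{max})$. Summing over the $|\phi|$ nodes yields the claimed $O(|\phi|~T_{max}~(log(n)+|L|_{max}))$ bound, and the induction step glues the per-node bounds together through the recursive structure of \algref{alg:sastl}.

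The main obstacle is justifying this amortization. A naive top-down recursion that re-evaluates the same subformula at the same $(t,l)$ many times would \emph{multiply} the temporal and spatial factors, giving $T_{max}^d$ and $(log(n)+|L|_{max})^d$ for operators nested to depth $d$, which is strictly worse than the theorem's additive-in-$|\phi|$ bound. To land on the stated complexity I would need to argue that \algref{alg:sastl}, together with \algref{alg:Aggregation} and \algref{alg:Counting}, effectively performs dynamic programming over the syntax tree---either by reorganizing the recursion into a bottom-up sweep or by explicitly memoizing the satisfaction of each subformula at each $(t,l)$ it is queried on---so that each syntax-tree node's cost is paid exactly once per relevant $(t,l)$. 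Once this sharing is formalized, the inductive combination of the per-operator bounds is a routine arithmetic calculation, and the $log(n)$ term is absorbed from the range-tree lookup step guaranteed by \lemref{lemma:spatial}.
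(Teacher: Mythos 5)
Your decomposition---charge each of the $|\phi|$ syntax-tree nodes at most $T_{max}(\log(n)+|L|_{max})$, using \lemref{lemma:spatial} for the spatial operators and the standard STL analysis for $\until$, then sum over the nodes---is exactly the paper's argument, so in that sense you have reconstructed the intended proof. The difference is that the paper handles your ``main obstacle'' by bare assertion (``when there are two or more operators nested, the time complexity for one operation is bounded by $O(T_{max}(\log(n)+|L|_{max}))$'') and moves on, whereas you correctly isolate this as the step that actually needs an argument: \algref{alg:sastl} is a top-down recursion, \algref{alg:Counting} re-invokes the monitor on its subformula once per neighbour location, and the until case re-invokes it once per time sample, so nesting multiplies rather than adds the temporal and spatial factors. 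That concern is legitimate, and the paper's proof does not answer it.

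Where your proposal still falls short is the claim that memoization or a bottom-up sweep closes the gap. Sharing only pays off when the same subformula is queried at the same $(t,l)$ more than once; for a formula such as $\ctr^{\op}_{\mathcal{D}_1}\bigl(\always_I(\agr^{\op}_{\mathcal{D}_2}\, x \sim c)\bigr) \sim c'$ the inner aggregation is queried at $|L_1|\cdot T$ pairwise distinct $(t',l')$ pairs, so even a perfectly memoized evaluation costs $\Theta(|L_1|\,T\,(\log(n)+|L_2|))$, which is not $O(|\phi|\,T_{max}(\log(n)+|L|_{max}))$ unless $|L_1|=O(1)$. A bottom-up table over all relevant $(t,l)$ pairs instead charges each node (number of distinct queries) times (cost per query), i.e.\ on the order of $n\,T_{tot}(\log(n)+|L|_{max})$ per node, which again is not the stated bound. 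So to make the theorem literally true you would need either to restrict attention to formulas without nested spatial operators (where your per-node accounting does go through) or to restate the bound with the per-node query count made explicit; as written, neither your amortization sketch nor the paper's one-line assertion establishes the additive-in-$|\phi|$ claim for nested operators.
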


\begin{proof}
Following Lemma \ref{lemma:spatial}, by considering 
$T_{max}$ the worst possible number of samples 
that we need to consider for all possible intervals of temporal operators present
in the formula, and $|L|_{max}$ for the worst possible number of locations that we need to consider for all possible intervals of spatial operators present
in the formula. 
When there are two or more operators nested, the time complexity for one operation is bounded by $O(T_{max}~(log(n) + |L|_{max}))$. 
As there are $|\phi|$ nodes in the syntax tree of $\phi$, the time complexity of the SaSTL monitoring algorithm is bounded by the summation over all $|\phi|$ nodes, which is $O(|\phi|~T_{max}~(log(n) + |L|_{max}))$.



\end{proof}

\subsection{Performance Improvement of SaSTL Parsing}

To monitor a requirement, the first step is parsing the requirement to a set 
of sub formulas with their corresponding spatial-temporal ranges. 
Then, we calculate the results for the sub formulas. The traditional parsing process of STL builds and calculates the syntax tree on the sequential order of the formula. It does not consider the complexity of each sub-formula. 
However, in many cases, especially with the PoIs specified in smart cities, checking the simpler propositional variable to quantify the spatial domain first can significantly reduce the number of temporal signals to check in a complicated formula. 
For example, the city abstracted graph in \figref{fig:eff}, the large nodes represent the locations of PoIs, among which the red ones represent the schools, and blue ones represent other PoIs. The small black nodes represent the locations of data sources (e.g. sensors). Assuming a requirement $\everywhere_{([0, +\infty], \mathsf{School})} \always_{[a,b]} (\agr^\mathsf{op}_{([0,d],\top])} \varphi \sim c)$ requires to aggregate and check $\varphi$ only nearby schools (i.e., the red circles), but it will actually check data sources of all nearby 12 nodes if one is following the traditional parsing algorithm. 
In New York City, there are about 2000 primary schools, but hundreds of thousands of PoIs in total. A very large amount of computing time would be wasted in this way.


\begin{figure}[t]
    \centering
     \includegraphics[width=5.6cm]{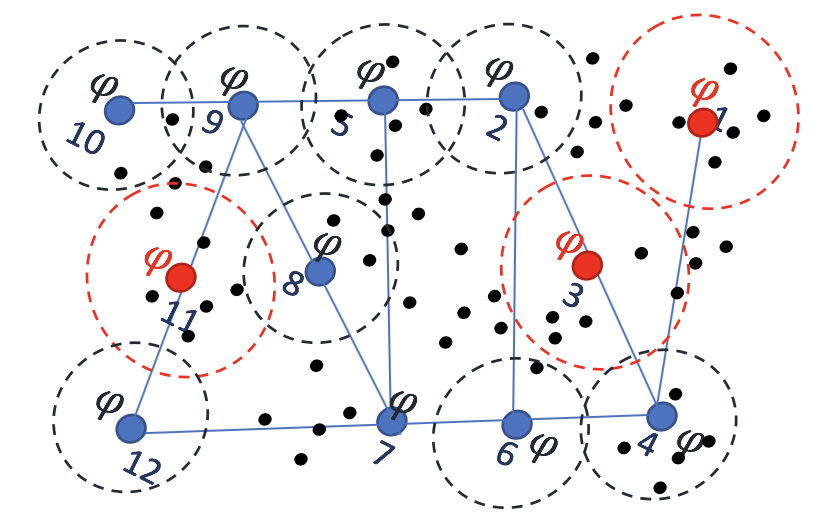} \\
    \caption{An example of city abstracted graph. A requirement is $\everywhere_{([0, +\infty], \mathsf{School})} \always_{[a,b]} (\agr^\mathsf{op}_{([0,d], \top)} \varphi \sim c)$ (The large nodes represent the locations of PoIs, among which the red ones represent the schools, and blue ones represent other PoIs. The small black nodes represent the locations of data sources.)}
     \vspace{-0.6em}
    \label{fig:eff}
\end{figure}

To deal with this problem,
we now introduce a monitoring cost function $\mathsf{cost}: \Phi \times L \times G_L \rightarrow \mathbb{R}^+ $, where $\Phi$ is the set of all the possible SaSTL formulas,
$L$ is the set of locations, $G_L$ is the set of all the possible undirected graphs with $L$ locations. The cost function for $\varphi$ is defined as:

\begin{equation*}
\footnotesize
\begin{split}
    &\mathsf{cost}(\varphi, l, G) =  \\
    &    \begin{cases}
        1 & \mathsf{if}~\varphi:=p \lor \varphi:=x\sim c \lor \varphi:=\mathsf{True} \\
         1+\mathsf{cost}(\varphi_1, l, G) & \mathsf{if}~\varphi:= \neg \varphi_1\\
         \mathsf{cost}(\varphi_1, l, G) + \mathsf{cost}(\varphi_2, l, G) & \mathsf{if}~\varphi := \varphi_1 * \varphi_2, *\in\{ \land, \until\}\\
         |L^l_{\ra}| &\mathsf{if}~ \varphi := \ag^\mathsf{op} x\sim c \\
         |L^l_{\ra}| \mathsf{cost}(\varphi_1, l, G)&\mathsf{if}~ \varphi := \ct^\mathsf{op} \varphi_1 \sim c\\
        \end{cases}
\end{split}
\end{equation*}

Using the above function, the cost of each operation is calculated before ``switch $\varphi$'' (refer \algref{alg:sastl}). 
The cost function measures how complex it is to monitor 
a particular SaSTL formula.  This can be used when the algorithm 
evaluates the $\land$ operator and it establishes the order 
in which the sub-formulas should be evaluated. The simpler 
sub-formula is the first to be monitored, while the more 
complex one is monitored only when the other sub-formula 
is satisfied. We update $\mathsf{monitor}(\varphi_1 \land \varphi_2,\omega)$ in \algref{alg:and}.

\begin{algorithm}
\tablefontsize
  \Case{$\varphi_1 \land \varphi_2$ }{
                        \Return Monitor($\varphi_1,\omega, t, l, G$) 
                        $\land$ Monitor($\varphi_2,\omega, t, l, G$); \\
                        \If{$\mathsf{cost}(\varphi_1, l, G) \leq \mathsf{cost}(\varphi_2, l, G)$}{      
                            \If {$\neg$ Monitor($\varphi_1,\omega, t, l, G$)}{
                                 \Return Monitor($\varphi_2,\omega, t, l, G$);
                            }
                            \Return True;
                        }
                        \If {$\neg$ Monitor($\varphi_2,\omega, t, l, G$)}{
                            \Return Monitor($\varphi_1,\omega, t, l, G$);
                        }
                        \Return True;
                    }
\caption{Satisfaction of $(\varphi_1 \land \varphi_2, \omega)$}
\label{alg:and}
\end{algorithm}


With this cost function, the time complexity of the monitoring algorithm is reduced to $O(|\phi|T_{max}(log(n)+|L'|_{max}))$, where $|L'|$ is the maximal number of locations that an operation is executed with the improved parsing method. The improvement is significant for monitoring smart cities, because for most of the city requirements, $|L'|_{max} < 100 \times |L|_{max}$.

\subsection{Parallelization}
In the traditional STL monitor algorithm, the signals are checked sequentially. For example, to see if the data streams from all locations satisfy $\everywhere_{\ra}\always_{[a,b]}\varphi$ in \figref{fig:eff}, usually, it would first check the signal from location 1 with $\always_{[a,b]}\varphi$, then location 2, and so on. At last, it calculates the result from all locations with $\everywhere_{\ra}$. In this example, checking all locations sequentially is the most time-consuming part, and it could reach over 100 locations in the field. 


To reduce the computing time, we parallelize the monitoring algorithm in the spatial domain. To briefly explain the idea: instead of calculating a sub-formula ($\always_{[a,b]}\varphi$) at all locations sequentially, we distribute the tasks of monitoring independent locations to different threads and check them in parallel. 
(\algref{alg:CountingPara} presents the parallel version of the spatial counting operator $\ct$.) To start with, all satisfied locations $l' \in L^l_{\ra}$ are added to a task pool (a queue). In the mapping process, each thread retrieves monitoring tasks (i.e., for $l_i, \always_{[a,b]}\varphi$) from the queue and executes them in parallel. All threads only execute one task at one time and is assigned a new one from the pool when it finishes the last one, until all tasks are executed. Each task obtains the satisfaction of $\mathsf{Monitor}(\varphi, \omega, t, l, G)$ function, and calculates the local result $v_i$ of operation $\mathsf{op}()$. The reduce step sums all the parallel results and calculates a final result  of  $\mathsf{op}()$. 


\begin{algorithm}
\tablefontsize
 \SetKwFunction{CS}{CountingNeighbours}
 \SetKwFunction{worker}{worker}
  \SetKwProg{Fn}{Function}{:}{}
  \Fn{\CS{$\varphi,op, \ra, \omega, t, l, G$}}{
      \Begin{
 
              paratasks = Queue();

            \For{$l' \in L^l_{\ra}$}{paratasks.add($l$)\;}
                
                results = Queue()\;
                
                \For{i in $1..\text{NumThreads}$}{
                Thread$_i$ $\leftarrow$ worker($\varphi,\omega, t, G$)\;
                }
                Wait()\;
                \Return op(results)\;
      }
}
  \Fn{\worker($\varphi,\omega, t, G$)}{
  \Begin{
               \textbf{Real} $v := 0$;
              
              \lIf{$op$ == "min"}
              {
                    $v := \infty $;
              }
              
              \lIf{$op$ == "max"}
              {
                  $v := - \infty $;
              }
\While{Num(tasks)>0}{
$l$ = paratasks.pop()\;
moni = Monitor($\varphi, \omega, t, l, G$)\;
v = op(v, moni)\;
}
  results.add(v)
  }
}
            
\caption{Parallelization of Counting of $(x, op, \ra, \omega, t, l, G)$}
\label{alg:CountingPara}
\end{algorithm}

\begin{lemma}
The time complexity of the parallelized algorithm Monitor($\phi$, $\omega$) is upper bounded by $O({|\phi|}T_{max}(log(n)+\frac{|L|_{max}}{P}))$ when distributed to $P$ threads.
\label{lemma:parallel}
\end{lemma}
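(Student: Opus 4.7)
The plan is to build directly on Lemma~\ref{lemma:spatial} and Theorem~\ref{th:timeAlg}, changing only the cost of the spatial loop. First I would recall that in the sequential case, monitoring a spatial operator at a single location $l$ and time $t$ costs $O(\log(n) + |L|)$, where the $\log(n)$ term comes from the range-tree lookup and the $|L|$ term comes from the linear scan over the returned neighbours (aggregating values for $\ag^{\op}$ or calling \texttt{Monitor} recursively for $\ct^{\op}$). In \algref{alg:CountingPara}, the range-tree query is still executed once to populate the task queue, but the per-neighbour work inside the \texttt{worker} function is now dispatched across $P$ threads.

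Second, I would argue that the neighbour-level work is embarrassingly parallel: each iteration of the loop over $l' \in L^l_{\ra}$ is independent except for the final reduction of partial results via the commutative, associative operator $\op \in \{\min, \max, \sm, \avg\}$. Assuming an ideal work-stealing schedule and ignoring constant-overhead synchronization (the standard assumption in the PRAM/fork--join cost model), the span of the spatial loop becomes $O(\lceil |L| / P \rceil)$, and the reduction adds only a lower-order term. Combined with the unchanged lookup cost, the per-location, per-time cost of a spatial operator drops from $O(\log(n) + |L|)$ to $O(\log(n) + |L|/P)$.

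Third, I would lift this bound through the syntax tree exactly as in the proof of Theorem~\ref{th:timeAlg}. Each of the $|\phi|$ nodes contributes at most a factor $T_{max}$ from nested temporal operators and at most the worst-case spatial factor $\log(n) + |L|_{max}/P$, so summing over nodes gives the claimed upper bound $O(|\phi|\,T_{max}\,(\log(n) + |L|_{max}/P))$. Nesting of spatial operators inside temporal (or vice versa) is handled by the same multiplicative argument used in Theorem~\ref{th:timeAlg}; parallelization at each level preserves the factor $P$ in the denominator of its own $|L|$ term.

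The main obstacle, and the one I would be careful to address explicitly, is the treatment of the $\log(n)$ range-tree lookup: it is \emph{not} parallelized and therefore remains additive rather than being divided by $P$. A reader might also worry about thread-contention on the shared \texttt{paratasks} and \texttt{results} queues; I would note that under the standard fork--join model these operations are $O(1)$ amortized and do not change the asymptotic bound, so the stated complexity is indeed an upper bound when $P$ threads are available and $|L|_{max} \ge P$ (for $|L|_{max} < P$ the bound degenerates to $O(\log(n))$ for the spatial part, which is still dominated by the stated expression).
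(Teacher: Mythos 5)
Your proof is correct, and in fact the paper states Lemma~\ref{lemma:parallel} with no accompanying proof at all, so your argument supplies exactly the reasoning the authors leave implicit: keep the unparallelized range-tree lookup as an additive $O(\log n)$ term, divide the independent per-neighbour loop of \algref{alg:CountingPara} by $P$ (the reduction being over a commutative, associative $\mathsf{op}$), and propagate the per-operator bound through the $|\phi|$ syntax-tree nodes exactly as in Theorem~\ref{th:timeAlg}. Your explicit caveats --- that the $\log n$ lookup is \emph{not} divided by $P$, that queue contention is $O(1)$ amortized, and that the bound degenerates gracefully when $|L|_{max} < P$ --- go beyond anything written in the paper and only strengthen the argument; the residual looseness in how nested spatial operators share the $P$ threads is the same looseness already present in the paper's own Theorem~\ref{th:timeAlg}, so it is not a gap relative to the source.
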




In general, the parallel monitor on the spatial domain reduces the computational time significantly. It is very helpful to support runtime monitoring and decision making, especially for a large number of requirements to be monitored in a short time. 
In practice, the computing time also depends on the complexity of temporal and spatial domains as well as the amount of data to be monitored. 
A comprehensive experimental analysis of the time complexity is presented in \sectref{sect:eva}.


\section{Evaluation}
\label{sect:eva}

We evaluate the SaSTL monitor by applying it to two city application scenarios, which are (i) runtime monitoring of city requirements in Chicago, and (ii) runtime conflict detection and resolution among smart services in a simulated New York City. Then, (iii) we compared the coverage expressiveness of SaSTL against 1000 real city requirements with STL, SSTL, and STREL. 
We provide the information of two application scenarios in \tabref{tab:three_city}.
\figref{fig:maps} presents the partial maps of two cities, where the locations of PoIs and sensors are marked.

\begin{figure}[t]
    \centering
     \includegraphics[width = 0.2\textwidth, trim=0 2cm 0 0, clip]{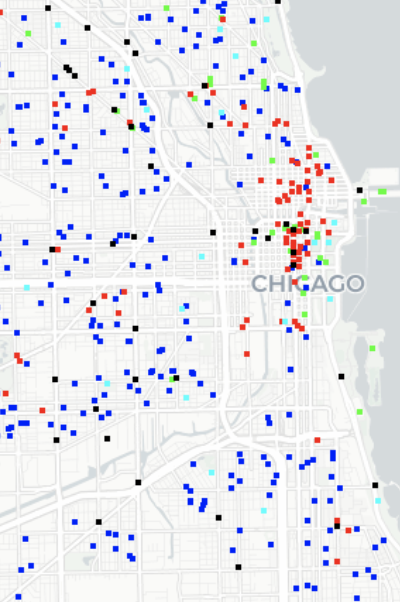} \hspace{0.3cm}
\includegraphics[width = 0.2\textwidth, trim=0 2cm 0 0, clip]{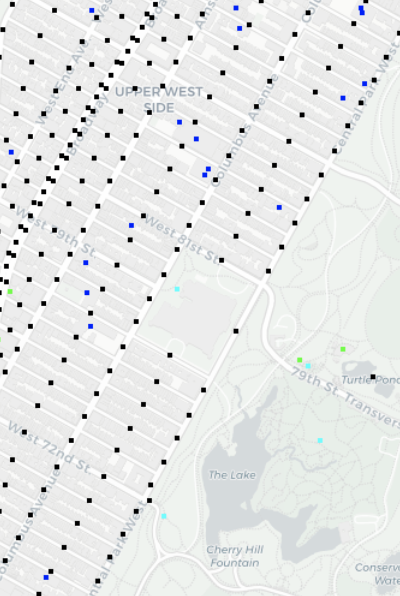}\\
\scriptsize{
(1) Chicago    \hspace{3cm}      (2) New York }
    \caption{Partial Maps of Chicago and New York with PoIs and sensors annotated. (The black nodes represent the locations of sensors, red nodes represent the locations of hospitals, dark blue nodes represent schools, light blue nodes represent parks and green nodes represent theaters.)}
    \label{fig:maps}
    \vspace{-1.5em}
\end{figure}

\begin{table}[t]
\caption{Information of Two Application Scenarios}
\label{tab:three_city}
\tablefontsize
\begin{tabular}{|L{1.5cm} |L{2.5cm}| L{3.5cm}|}
\hline
           & \textbf{Chicago}   &\textbf{ New York }                                                                                                                       \\\hline
\textbf{Area (km${^2}$) }    & 606 & 60                                                                                                                                                  \\ \hline
\textbf{Data Type}        & Real-time States     & Real-time Predicted States                                                                                                                             \\ \hline
\textbf{Data Sources} &  Real Sensors & Simulated Sensors and Actuators   \\\hline
\textbf{Number of Sensor Nodes}     & 118    & 10,000                                                                                                                                               \\  \hline
\textbf{Time Period}    & 2017.01-2019.05     & -                                                                                                                                        \\ \hline
\textbf{Sampling Rate}   & 1 min    & 10 seconds                                                                                                                                      \\ \hline
\textbf{Domain}     & Environment, Public Safety      & Environment, Transportation, Events, Emergencies, Public Safety                       \\\hline
\textbf{Variables}      & CO, NO, O$_3$, Visible light, Crime Rate & CO, NO, O$_3$, PM$_x$, Noise, Traffic, Pedestrian, Signal Lights, Emergency Vehicles, Accidents    \\ \hline
\end{tabular}
\vspace{-1.5em}
\end{table}



\subsection{Runtime Monitoring of Real-Time Requirements in Chicago}

\subsubsection{Introduction}
We apply SaSTL to monitor the real-time requirements in Chicago. The framework is the same as shown in \figref{fig:overview}, where we first formalize the city requirements to SaSTL formulas and then monitor the city states with the formalized requirements. Chicago is collecting and publishing the city data \cite{arrayofthings, chicagocrime} every day since January, 2017 without a state monitor.  
In this evaluation, we emulate the real data as it arrives in real time, i.e. assuming the city was operating with the monitor. 
Then we specify 80 safety and performance requirements that are generated from the real requirements, and apply the SaSTL to monitor the data every 3 hours continuously to identify the requirement violations.

\subsubsection{Chicago Performance}

Valuable information is identified from the monitor results of different periods during a day.
We randomly select 30 days of weekdays and 30 days of weekends. We divide the daytime of a day into 4 time periods and 3 hours per time period. 
We calculate the percentage of satisfaction (i.e., number of satisfied requirement days divides 30 days) for each time period, respectively. The results of two example requirements CR1 and CR2 are shown in \figref{fig:chicago_case}. 
CR1 specifies ``The average air quality  within 5km of all schools should always be above \textit{Moderate} in the next 3 hours.'' and is formalized as $\everywhere_{([0,+\infty),  {\mathsf{School}})}\always_{[0,3]}(\agr_{([0,5],\top)}^{\avg} x_{\mathsf{air}} > \mathsf{Moderate})$. CR2 specifies ``For the blocks with a high crime rate, the average light level within 3 km should always be \textit{High}'' and is formalized as $\everywhere_{([0,+\infty),\top)}\always_{[0,3]}  ( x_{\mathsf{Crime}} = \mathsf{High} \rightarrow \agr_{([0,3],\top)}^{\avg} x_{\mathsf{Light}} >= \mathsf{High})$. 

\vspace{-0.2em}
\begin{figure}[t]
    \centering
    \includegraphics[width=.5\textwidth]{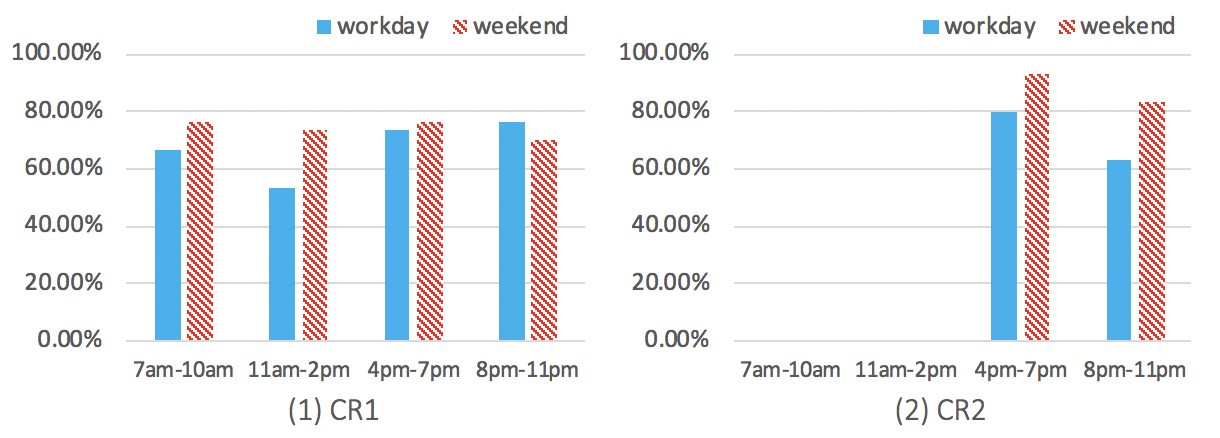}
    \caption{Requirement Satisfaction Rate during Different Time Periods in Chicago}
    \label{fig:chicago_case}
    
\end{figure}

The SaSTL monitor results can be potentially used by different stakeholders. 

First, \textit{with proper requirements defined, the city decision makers are able to identify the real problems and take actions to resolve or even avoid the violations in time}. For example, from the two example requirements in \figref{fig:chicago_case}, we could see over 20\% of the time the requirements are missed everyday. 
Based on the monitoring results of requirement CR1, decision makers can take actions to redirect the traffic near schools and parks to improve the air quality. 
Another example of requirement CR2, the satisfaction is much higher (up to 33\% higher in CR2, 8pm - 11pm) over weekends than workdays.  There are more people and vehicles on the street on weekends, which as a result also increases the lighted areas. However, as shown in the figure, the city lighting in the areas with high crime rate is only 60\%. An outcome of this result for city managers is that they should pay attention to the illumination of workdays or the areas without enough light to enhance public safety.  

Second, \textit{it gives the citizens the ability to learn the city conditions and map that to their own requirements}. They can make decisions on their daily living, such as the good time to visit a park. For example, requirement CR1, 11am - 2pm has the lowest satisfaction rate of the day. 
The instantaneous air quality seems to be fine during rush hour, but it has an accumulative result that affects citizens' (especially students and elderly people) health. 
A potential suggestion for citizens who visit or exercise in the park is to avoid 11am - 2pm.

\subsubsection{Algorithm Performance}

\vspace{-0.3em}
\begin{figure}[t]
    \centering
    \includegraphics[width=8cm]{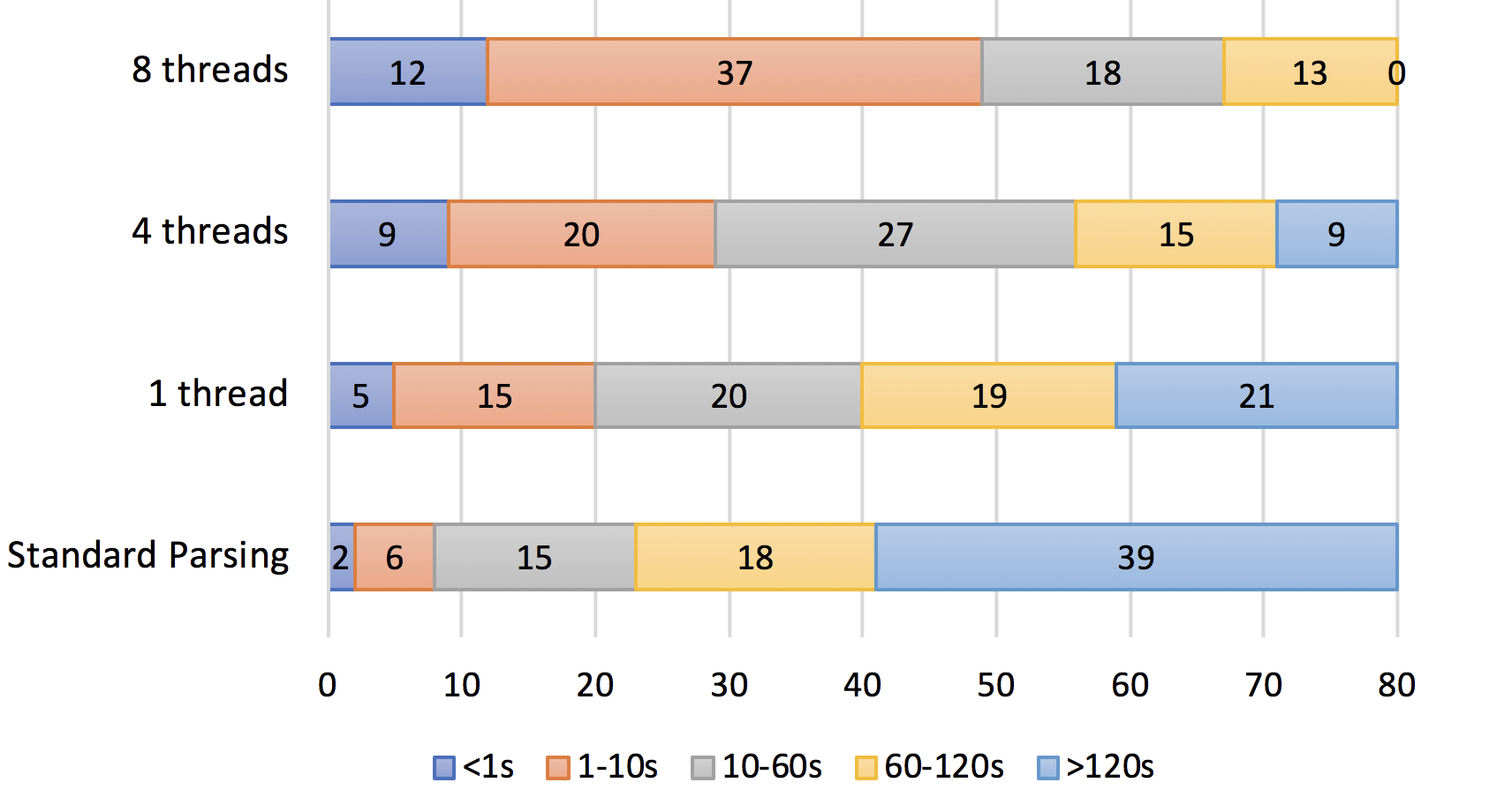}\\
    \caption{Number of Requirements Checked on Different Computing Time (x: the number of requirements, y: the number of threads, bars: different periods of computing time)}
    \label{fig:chicago80}
    \vspace{-0.3em}
\end{figure}

\begin{table*}[htbp]
\caption{Safety and Performance Requirements for New York City}
\label{table:req_ny}
\tablefontsize
\centering
\begin{tabular}{|L{0.5cm}|m{9.5cm}|L{6.5cm}|}
\hline
 & \textbf{Requirement} & \textbf{SaSTL}                                                                                                           \\\hline
\textbf{NYR1} & The average noise level in the school area (within 1km) should always be less than 50dB in the next 30min.  &  
$\boxbox_{ ([0, +\infty), \mathsf{School})}\always_{[0,30]} (\agr_{([0,1], \top)}^{\avg} x_\mathsf{Noise} < 50) $ \\\hline
\textbf{NYR2} & If an accident happens, at least one of the nearby hospitals  (within 5km), its traffic condition within 2km should not reach the level of congestion in the next 60 min. &
$\everywhere_{([0,+\infty], \top)}(\mathsf{Accident}\rightarrow 
 \mathcal{C}_{([0,5],\mathsf{Hospital})}(\always_{[0,60]}(\agr_{([0,2], \top)}^{\avg} x<\mathsf{Congestion}))>0)$
\\\hline
\textbf{NYR3} & If there is an event, the max number of pedestrians waiting in an intersection should not be greater than 50 for more than 10 minutes.     & 
$\everywhere_{([0, +\infty),\top)}(\mathsf{Event} \rightarrow  \always_{[0,10]} (\agr_{([0,1],\top)}^{\max} x_\mathsf{ped} < 50))$
\\\hline
\textbf{NYR4} & At least 90\% of the streets, the PMx emission should not exceed \textit{Moderate} in 60 min.                  &  $ \mathcal{C}_{([0,+\infty),\top)}^\avg(\always_{[0,60]} (\agr_{([0,1],\top)}^{\max} x_\mathsf{PMx} < \mathsf{Moderate})) > 0.9$\\\hline
\textbf{NYR5} & If an accident happens, it should be solved within 60 min, and before that nearby (500 m) traffic should be above moderate on average and safe in worst case.    &    
$\boxbox_{([0, +\infty),\top)}(\mathsf{Accident}\rightarrow  (\agr_{([0,500],\top)}^{\avg} x_\mathsf{traffic} <\mathsf{Moderate} \land \agr_{([0,500],\top)}^{\max} x_\mathsf{traffic} < \mathsf{Safe}) \mathcal{U}_{[0,60]} \neg \mathsf{Accident})$
\\\hline                                    
\end{tabular}
\vspace{-0em}
\end{table*}


We count the average monitoring time taken by each requirement when monitoring for 3-hour data. Then, we divide the computing time into 5 categories, i.e., less than 1 second, 1 to 10 seconds, 10 to 60 seconds, 60 to 120 seconds, and longer than 120 seconds, and count the number of requirements under each category under the conditions of standard parsing, improved parsing with single thread, 4 threads, and 8 threads.  The results are shown in \figref{fig:chicago80}. Comparing the 1st (standard parsing) and 4th (8 threads) bar, without the improved monitoring algorithms,  for about 50\% of the requirements, each one takes more than 2 minutes to execute.  
The total time of monitoring all 80 requirements is about 2 hours, which means that the city decision maker can only take actions to resolve the violation at earliest 5 hours later. 
However, with the improved monitoring algorithms, for 49 out of 80 requirements, each one of them is executed within 60 seconds, and each one of the rest requirements is executed within 120 seconds.
The total execution time is reduced to 30 minutes, which is a reasonable time to handle as many as 80 requirements. More importantly, it illustrates the effectiveness of the parsing function and parallelization methods. Even if there are more requirements to be monitored in a real city, it is doable with our approach by increasing the number of processors.

\subsection{Runtime Conflict Detection and Resolution in Simulated New York City}

\subsubsection{Introduction}
The framework of runtime conflict detection and resolution~\cite{ma2016detection} considers a scenario where smart services send action 
requests to the city center, and where a simulator predicts 
how the requested actions change the current city states over 
a finite future horizon of time. In this way, we use SaSTL monitor to specify requirements and check the \textit{predicted spatial-temporal data} with the SaSTL formulas. 
If there exists a requirement violation within the future horizon, a conflict is detected. To resolve the conflict, it provides several possible resolution options and predicts the outcome of all these options, which are verified by the SaSTL monitor.
It finds the best option without a violation, otherwise, it provides the trade-offs between a few potential resolution 
options to the city manager through monitoring the predicted traces. Details of the resolution are not the main part of this paper and we thank the original authors for making the solution available to the authors of the present paper.




We set up a smart city simulation of New York City using the Simulation of Urban MObility (SUMO) \cite{sumo} with the real city data \cite{nycopendata}, on top of which, we implement 10 smart services (S1: Traffic Service, S2: Emergency Service, S3: Accident Service, S4: Infrastructure Service, S5: Pedestrian Service, S6: Air Pollution Control Service, S7: PM2.5/PM10 Service, S8: Parking Service, S9: Noise Control Service, and S10: Event Service). 
The states from the domains of environment, transportation, events and emergencies are generated from about 10000 simulated nodes. Please see \tabref{tab:three_city} for the variables. 

We use the STL Monitor as the \textit{baseline} to compare the capability of requirement specification and the ability to improve city performance. 
We simulate the city running for 30 days in three control sets, one without any monitor, one with the STL monitor and one with the SaSTL monitor. 
For the first set (no monitor), there is no requirement monitor implemented.
For the second one (STL monitor), we specify NYR1 to NYR5 using STL without aggregation over multiple locations, i.e., NYR1 is a set of requirements on many locations and each location has a temporal requirement. For example, NYR1 is specified as $\varphi_{l_i} = \always_{[0,t]} (x_{\mathsf{air}}>\mathsf{Moderate})$, where $l_i \in L$, $L$ is a set of sensors within the range. The setting is the same for the rest of the requirements. For the third one (SaSTL monitor), five examples of different types of real-time requirements and their formalized SaSTL formulas are given in \tabref{table:req_ny}.

\subsubsection{NY City Performance}

The results are shown in \tabref{tab:performance}. We measure the city performance from the domains of transportation, environment, emergency and public safety using the following metrics, the total number of violations detected, the average CO (mg) emission per street, the average noise (dB) level per street, the emergency vehicles waiting time per vehicle per intersection, the average number and waiting time of vehicles waiting in an intersection per street, and the average pedestrian waiting time per intersection. To be noted, the number of violations detected is the total number of safety requirements violated, rather than the number of conflicts. Many times there is more than one requirement violated in one conflict. 

\begin{table}[htbp]
\caption{Comparison of the City Performance with the STL Monitor and the SaSTL Monitor}
\label{tab:performance}
\tablefontsize
\begin{tabular}{|l|R{1.2cm}|R{1.2cm}|R{1.2cm}|}
\hline
                            & {No Monitor }   & {STL Monitor} & \textbf{SaSTL Monitor} \\ \hline
\textbf{Number of Violation}         & Unknown & 219         & \textbf{173}           \\ \hline
\textbf{Air Quality Index}                     & 67.91   & 51.22       & \textbf{40.18}         \\ \hline
\textbf{Noise (db) }                 & 73.32   & 49.27       & \textbf{41.42}         \\ \hline
\textbf{Emergency Waiting Time (s)} & 20.32   & 12.93       & \textbf{11.88}        \\ \hline
\textbf{Vehicle Waiting Number }    & 22.7    & 18.2        & \textbf{12.6}          \\ \hline
\textbf{Pedestrian Waiting Time (s)}& 190.2   & 130.9       & \textbf{61.1}          \\ \hline
\textbf{Vehicle Waiting Time (s)}  & 112.12  & 70.98       & \textbf{59.22}         \\ \hline
\end{tabular}
\vspace{-0.5em}
\end{table}

We make some observations by comparing and analyzing the monitoring results. 
First, \textit{the SaSTL monitor obtains a better city performance with fewer number of violations detected under the same scenario.}
As shown in \tabref{tab:performance}, the number of violations detected by SaSTL Monitor is 46 less than the STL monitor.
On average, the framework of conflict detection and resolution with the SaSTL monitor improves the air quality by 21.6\% and  40.8\% comparing to the one with the STL monitor and without a monitor, respectively. It improves the pedestrian waiting time by 16.6\% and 47.2\% comparing to the STL monitor and without a monitor, respectively.

Second, \textit{the SaSTL monitor reveals the real city issues, helps refine the safety requirements in real time and supports improving the design of smart services. }
We also compare the number of violations on each requirement. The results (\figref{fig:pie} (1)) help the city managers to measure city's performance with smart services for different aspects, and also help policymakers to see if the requirements are too strict to be satisfied by the city and make a more realistic requirement if necessary. For example, in our 30 days simulation, apparently, NYR4 on air pollution is the one requirement that is violated by most of the smart services. 
Similarly, \figref{fig:pie} (2) shows the number of violations caused by different smart services. Most of the violations are caused by S1, S5, S6, S7, and S10. The five major services in total cause 71.3\% of the violations. City service developers can also learn from these statistics to adjust the requested actions, the inner logic and parameters of the functions of the services, so that they can design a more compatible service with more acceptable actions in the city.

 \subsubsection{Algorithm Performance}
We compare the average computing time for each requirement under four conditions, (1) using the standard parsing algorithm without the cost function, (2) improved parsing algorithm with a single thread, (3) improved parsing algorithm with spatial parallelization using 4 threads and (4) using 8 threads. The results are shown in \tabref{tab:computingtime}.

\begin{table}[htbp]
\tablefontsize
\caption{Computing time of requirements with standard parsing function, with improved parsing functions and different number of threads}
\label{tab:computingtime}
\begin{tabular}{|l|r| r| r| r|}
\hline
 & \textbf{Standard Parsing (s)} & \textbf{1 thread (s)} & \textbf{4 threads (s)} & \textbf{8 threads (s)} \\ \hline
\textbf{NYR1}         & 2102.13          & 140.29   & 50.31     & \textbf{26.12}     \\ \hline
\textbf{NYR2}         & 55.2             & \textbf{0.837}    & 1.023     & 0.912     \\ \hline
\textbf{NYR3}         & 69.22            & 22.25    & 7.54      & \textbf{4.822}     \\ \hline
\textbf{NYR4}         & 390.19           & 390.19   & 100.23    & \textbf{53.32}     \\ \hline
\textbf{NYR5}         & 61.76            & 61.76    & 20.25     & \textbf{15.68 }    \\ \hline
\textbf{Total}      & 2678.5           & 615.32  & 179.35   & \textbf{100.85}   \\ \hline
\end{tabular}
\vspace{-0.3em}
\end{table}

\begin{figure}[htbp]
    \centering
    \includegraphics[width=8cm]{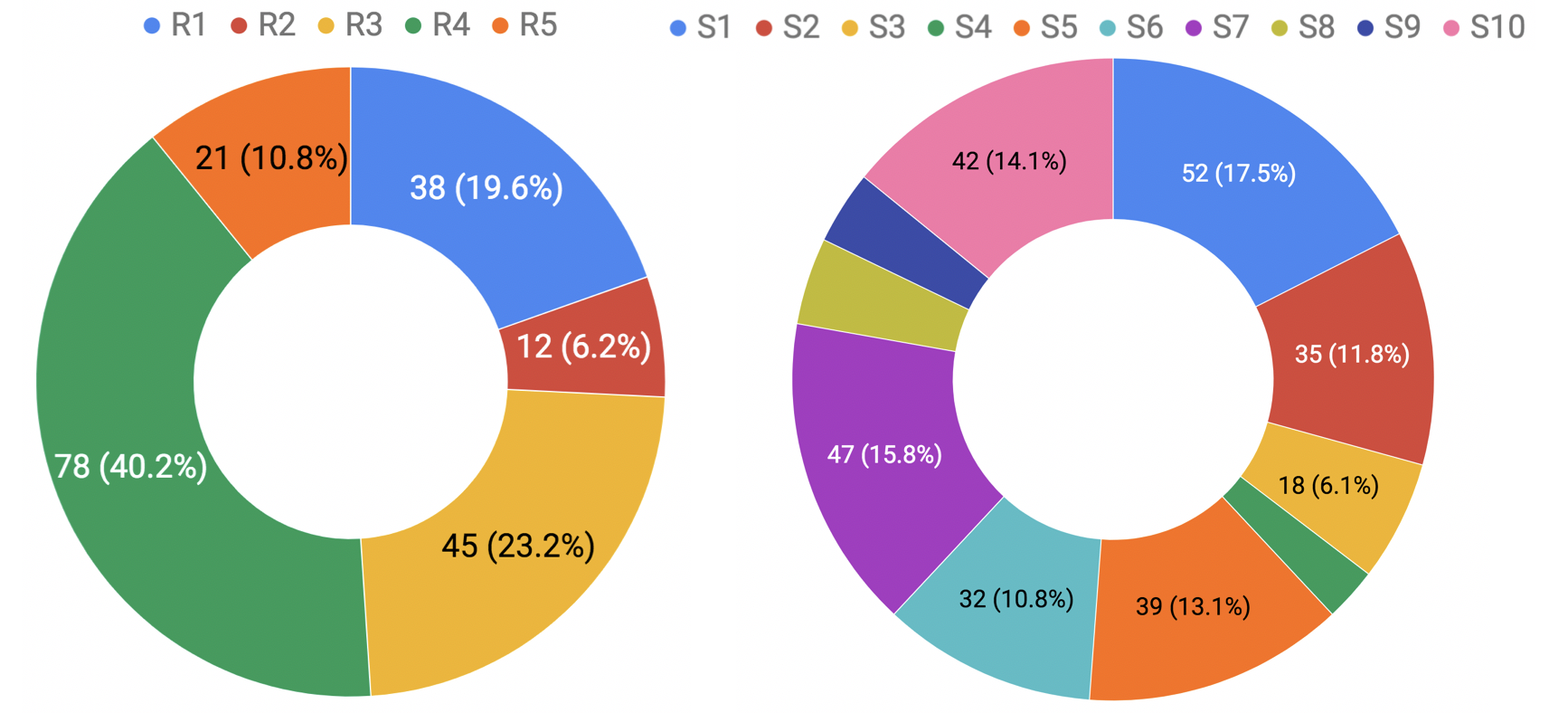}\\
    \scriptsize{(1) Requirements \hspace{1.4cm}(2) Smart Services}
    \caption{Distributions of the violations over requirements and smart services}
    \label{fig:pie}
    \vspace{-0.3em}
\end{figure}

First, the improved parsing algorithm reduces the computing time significantly for the requirement specified on PoIs, especially for NYR1 that computing time reduces from 2102.13 seconds to 140.29 seconds (about 15 times). 

Second, the parallelization over spatial operator further reduces the computing time in most of the cases. For example, for NYR1, the computing time is reduced to 26.12 seconds with 8 threads while 140.29 seconds with single thread (about 5 times). When the amount of data is very small (NYR2), the parallelization time is similar to the single thread time, but still much efficient than the standard parsing.  

The results demonstrate the effectiveness and importance of the efficient monitoring algorithms. In the table, the total time of monitoring 5 requirements is reduced from 2678.5 seconds to 100.85 seconds. In the real world, when multiple requirements are monitored simultaneously, the improvement is extremely important for real-time monitoring.

\subsection{Coverage Analysis}

We compare the specification coverage on 1000 quantitatively-specified real city requirements between STL, SSTL, STREL and SaSTL, the results are shown in \figref{fig:coverage}.  
The study is conducted by graduate students following the rules that if the language is able to specify the whole requirement directly with one single formula, then it is identified as True.
To be noted, another spatial STL, SpaTeL is not considered as a baseline here, because it is not applicable to most of city spatial requirements. SpaTeL is built on a quad tree, and able to specify directions rather than the distance. 

\begin{figure}[htbp]
    \centering
    \includegraphics[width=.5\textwidth]{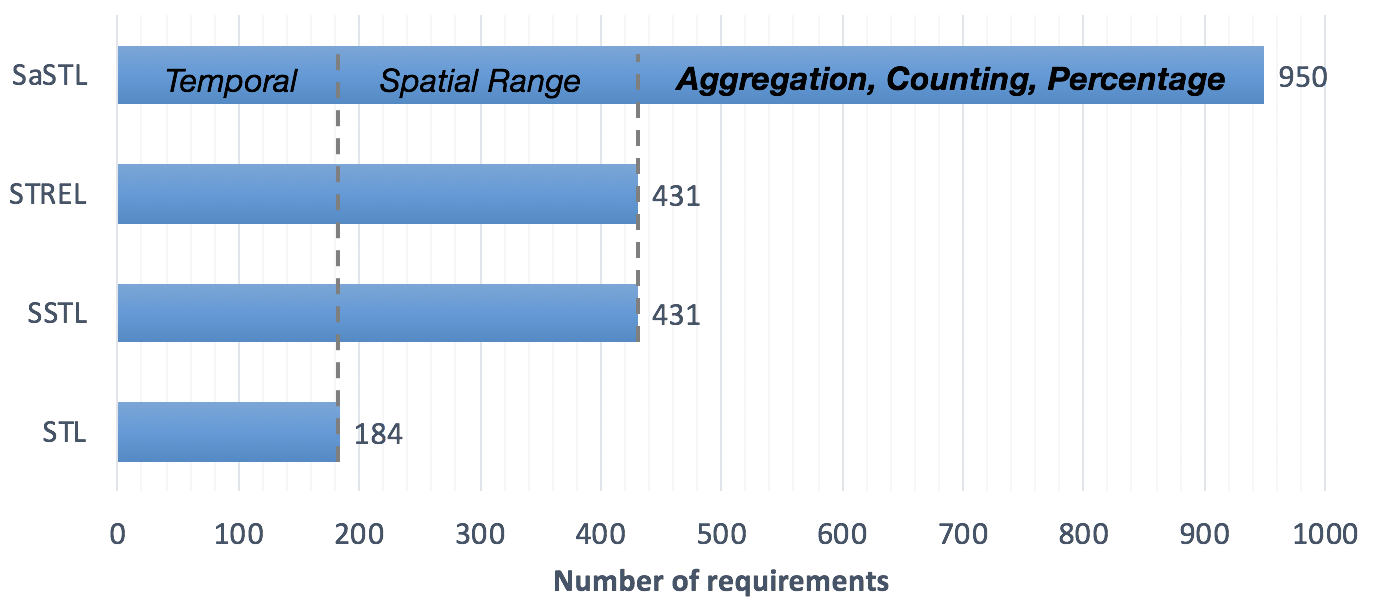}
    \caption{{Comparison of the Specification Coverage on 1000 Real City Requirements}}
    \label{fig:coverage}
    \vspace{-0.8em}
\end{figure}

As shown in \figref{fig:coverage}, STL is only able to specify 184 out of 1000 requirements out, while SSTL and STREL are able to formalize 431 requirements. 
SaSTL is able to specify 950 out of 1000 requirements.
In particular, we made the following observations from the results. 
First, 
50 requirements cannot be specified using any of the four languages because they are defined by complex math formulas that are ambiguous with missing key elements, relevant to the operations of many variables, or referring to a set of other requirements, e.g. ``follow all the requirements from Section 201.12'', etc.  
Secondly, SSTL, STREL and SaSTL outperformed STL in terms of requirements with spatial ranges, such as ``one-mile radius around the entire facility''; Third, SSTL and STREL have the same coverage on the requirements that only contain a temporal and spatial range. Comparing to SSTL and SaSTL, STREL can also be applied to dynamic graph and check requirements reachability, which is very useful in applications like wireless sensor networks, but not common in smart city requirements;
Fourth, for the rest of the requirements (467 out of 1000) that can only be specified by SaSTL contains a various set of locations and the aggregation over spatial ranges.

\section{Related Work}
\label{sect:related}


Smart cities are special cases of cyber-physical systems 
(CPS) featuring a large amount of spatial distributed 
components that are responsible for monitoring and 
controlling the physical environment.
In the last decade, new formal specification languages 
have been proposed to define spatial-temporal requirements 
over the execution of CPS with spatial distributed components.
For example, Talcott in~\cite{Talcott08} has introduced 
the notion of a spatial-temporal event-based model for CPS. 
In such a model, the execution of actions, the exchange 
of messages and the physical changes trigger events that 
are processed by a monitor after being labeled  with 
time and space stamps.
This concept is further elaborated in~\cite{TVG09} 
where a 2D Cartesian coordinate  with location points and 
location fields system is used to represent the space. 
However, the approaches proposed in~\cite{Talcott08,TVG09} 
lack a spatio-temporal logic formalism  
enabling the specification of the requirements and the 
automatic monitoring generation.
Other mathematical structures such as \emph{topological spaces}, 
\emph{closure spaces}, \emph{quasi-discrete closure spaces} and 
\emph{finite graphs}~\cite{NenziBCLM15} can be used to reason 
about spatial relations, such as \emph{closeness} and \emph{neighborhood}
in the context of \emph{collective adaptive systems}~\cite{CianciaLLM16}.
Monitoring spatial distributed cyber-physical systems 
requires to deal with real-value signals over 
continuous time.  Therefore, a common  
approach to design a specification language for 
spatio-temporal properties consists in extending 
{\it STL}~\cite{Maler2004} with spatial operators. 
Recent examples are SpaTeL~\cite{bartocci2015}, 
SSTL~\cite{NenziBCLM15} and STREL~\cite{BartocciBLN17}.




Despite being well-defined and used in several CPS applications, none of these logic systems can be directly applied to smart city requirements specification. The SaSTL proposed in this paper is meant to fill the blank between the formal logics and smart cities. Comparing to the spatial STL discussed above, SaSTL (1) specifies the spatial operators with distance and PoIs from the real world,  (2) adds new aggregation operators, (3) adds a counting operator, and (4) monitors efficiently with parallel processing. 
\section{Conclusion}
\label{sect:sum}


In this paper, we present a novel Spatial Aggregation Signal Temporal Logic (SaSTL) to specify and to monitor real-time safety requirements of smart cities at runtime. 
We develop an efficient monitoring framework 
that optimizes the requirement parsing process and can check in parallel a SaSTL requirement over multiple data streams generated  
from thousands of sensors that are typically spatially distributed over a smart city.

SaSTL is a powerful specification language for smart cities because of its capability to monitor the city desirable features of temporal (e.g. real-time, interval), spatial (e.g., PoIs, range) and their complicated relations (e.g. always, everywhere, aggregation) between them.  
More importantly, it can coalesce many requirements into a single SaSTL formula and provide the aggregated results efficiently, which is a major advance on what smart cities do now. 
We believe it is a valuable step towards developing a practical smart city monitoring system 
even though there are still open issues for future work. Furthermore, SaSTL monitor is more than a smart city monitor, it can also be easily generalized and applied to monitor other large-scale IoT deployments with a large number of sensors and actuators across the spatial domain at runtime efficiently.



\section*{Acknowledgment}
Supported in part by NSF NRT Grant 1829004.


\bibliographystyle{IEEEtran}
\bibliography{monibib}


\end{document}